\date{\today}
\author{Katar\'{\i}na Burclov\'{a}\thanks{katarina.burclova@fmph.uniba.sk} and Andrej P\'{a}zman\thanks{pazman@fmph.uniba.sk}\\
Comenius University Bratislava\thanks{Faculty of Mathematics, Physics and Informatics, Mlynsk\'{a} dolina, 842 48 Bratislava, Slovak Republic}
}
\title{Optimal design of experiments via linear programming}
\newtheorem{theorem}{Theorem}
\theoremstyle{definition}
\newtheorem{remark}{Remark}
\theoremstyle{definition}
\newtheorem{example}{Example}
\begin{document}

\maketitle
\begin{abstract}
We investigate the possibility of extending some results
of \cite{PP14} to a larger set of optimality criteria. Namely, in a linear
regression model the problem of computing $D$-, $A$-, $E_k$-optimal designs, of
combining these optimality criteria, and the ``criterion robust'' problem of
\cite{H04} are reformulated here as ``infinite-dimensional'' linear programming problems.
Approximate optimum designs can then be computed by a modified cutting-plane
method, and this is checked on examples. Finally, the expressions for these
criteria are reformulated in terms of the response function of an even
nonlinear model.
\end{abstract}
\paragraph{Keywords:}{Regression models, optimality criteria, concave maximization, cutting-plane method, criterion-robust design.}
\section{Introduction} 
\label{sec:intro}
We consider a regression model 
\[
y_i=\eta \left( x_i,\theta \right) +\varepsilon _i,\quad i=1,\hdots,N ,
\]
where $y_i$ are observed variables, $\varepsilon _i$ are observation errors,
which satisfy $\mathbb{E}\left( \varepsilon _i\right) =0$, and $Var\left(
\varepsilon _i\right) =\sigma ^2,\,Cov\left( \varepsilon _i,\varepsilon
_j\right) =0$ for $i\neq j$, $\sigma ^2$ is not supposed to be known. The
value of $\theta $ is a priori restricted to a parameter space $\Theta $.  In a
vector notation the model is 
\begin{eqnarray*}
{y} &=&{\eta }_X\left( \theta \right) +{\varepsilon }, \\
\mathbb{E}\left( {\varepsilon }\right) &=&{0},Var\left( {
\varepsilon }\right) =\sigma ^2{I}.
\end{eqnarray*}
Here $X=\left( x_1,\hdots,x_N\right) $ is the exact design with points $x_i\in 
\mathcal{X}$,  ${y}=\left( y_1,\hdots,y_N\right) ^\top$, $\varepsilon=(\varepsilon_1,\hdots,\varepsilon_N)^\top$, ${\eta }
_X\left( \theta \right) =\left( \eta \left( x_1,\theta \right)
,\hdots,\eta \left( x_N,\theta \right) \right) ^\top.$ The design space $\mathcal{X}$ is supposed here to be finite. Instead of the exact design $X$ we can
consider equivalently for any $x\in \mathcal{X}$ the value $\xi _X\left(
x\right) $ of the relative frequency of $x$ within the design $X$. By a
standard approximation procedure, we consider the set $\Xi $ of all
probability measures defined on $\mathcal{X}$, as the set of all approximate
designs allowed in the experiment.

In the main part of the present paper we suppose the
linearity of the response function $\eta \left( x_i,\theta \right)
=f^\top\left( x_i\right) \theta$, and we suppose $\Theta =\mathbb{R}^p$. In a standard way, to any $\xi \in \Xi $ is associated its information
matrix 
$$
M\left( \xi \right) =\sum_{x\in \mathcal{X}}f\left( x\right) f^\top\left( x\right) \xi
\left( x\right),
$$
with $f(x)=(f_1(x),\hdots,f_p(x))^\top$. According to the aim of the experiment, we may choose an optimality
criterion $\phi \left( \xi \right) $, and a design $\mu $ is considered $\phi $-optimal when $\phi \left( \mu \right) =\max_{\xi \in \Xi }\phi \left(
\xi \right) $. Standard criteria $\phi \left( \cdot\right) $ are concave
functions on $\Xi $ having a statistical interpretation.

In \cite{PP13} the criteria of $E$-, $c$-, and $G$-optimality were considered, and the
corresponding criteria functions have been rewritten in a form 
\[
\phi \left( \xi \right) =\min_{u\in \mathbb{R}^p}\sum_{x\in \mathcal{X}}T\left(
u,x\right) \xi \left( x\right) 
\]
with given $T\left( u,x\right)$. This, together with the standard
restrictions on $\xi ,$ defines an ``infinite-dimensional'' linear programming (LP) problem: to choose
the values of $\xi \left( x\right) ;x\in \mathcal{X}$ and of $t\in \mathbb{R}$ so to
maximize $t$ under infinitely many linear restrictions: 
\begin{eqnarray*}
\sum_{x\in \mathcal{X}}T\left( u,x\right) \xi \left( x\right) &\geq &t\text{
\quad for any }u\in \mathbb{R}^p ,\\
\xi \left( x\right) &\geq &0\quad \text{for any }x\in \mathcal{X}\text{, and}
\sum_{x\in \mathcal{X}}\xi \left( x\right) =1.
\end{eqnarray*}
In particular, for $E$-optimality, with $\phi _E\left( \xi \right) $ equal to
the minimum eigenvalue of $M\left( \xi \right)$, we have 
\[
\phi _E\left( \xi \right) =\min_{u\in \mathbb{R}^p}\frac{u^\top M\left( \xi \right)
u}{u^\top u }=\min_{u\in \mathbb{R}^p}\sum_{x\in \mathcal{X}}\frac{[f^\top\left(
x\right) u]^2}{u^\top u}\xi \left( x\right) .
\]
The main idea of \cite{PP14} was to substitute the nonlinear response function $
\eta \left( x,\theta \right) $ instead of $f^\top\left( x\right) \theta $ and
so to obtain new criteria for nonlinear models, with the aim
to detect the lack of identifiability under the design $\xi .$ However, a
second aim of \cite{PP14} was to point attention to the fact that for those expressions for criteria
an LP method could be used to obtain nearly optimum designs
in linear models.

In the present paper we follow this second aim, but for $D$-, $A$-, and $E_k$-optimality criteria and also for the computationally not easy task to find the
``criterion robust'' optimum design in linear models, or to find a $D$-optimum design under the condition that the $A$-optimality criterion exceeds a given value. 
The difficulties to achieve also the first aim for $D$-, $A$-, and $E_k$-criteria are discussed in Appendix.

We notice that LP method has been used to compute $c$-optimal design in \cite{HJ08} but under a quite different set-up.

\section{Reformulation of the optimality criteria}
\label{sec:sec1}
The $D$-optimal design maximizes $\det\left( M\left( \xi \right) \right) $,
hence minimizes the generalized variance of $\hat{\theta}$, the BLUE of $\theta$.
The $A$-optimal design minimizes the sum of the variances of $\hat{\theta}%
_1,\hdots,\hat{\theta}_p$. The $E_k$-optimal design maximizes the sum of the
smallest $k$ eigenvalues of $M\left( \xi \right) $. There are many forms of
expressing the corresponding criteria functions $\phi \left( \xi \right) $.
All forms of $\phi \left( \xi\right) $ representing the same
criterion maintain the ordering of the designs but differ by the scaling
of this ordering, say $\phi \left( \xi\right) =\ln \det \left[
M\left( \xi \right) \right] $ and $\phi \left( \xi \right)
=\det^{1/p}\left[ M\left( \xi \right) \right] $ for $D$-optimality, and
similarly for the other criteria. Here we prefer criteria functions which
are not only concave, but also positively homogeneous, $\phi \left( \alpha
\xi \right) =\alpha \phi \left( \xi \right) $ for $\alpha > 0$ (see \cite{P93} for a
justification). So for the $D$-optimality $\phi _D\left( \xi \right)
=\det^{1/p}\left[ M\left( \xi \right) \right] $, for the $A$-optimality $\phi
_A\left( \xi \right) =1/tr\left[ M^{-1}(\xi) \right] $ when $M\left(
\xi \right) $ is nonsingular, and for the $E_k$-optimality $\phi
_{E_k}\left( \xi \right) $ =$\lambda _1\left( \xi \right) +\hdots +\lambda
_k\left( \xi \right) $ where $\lambda _1\left( \xi \right) \leq \lambda
_2\left( \xi \right) \leq \hdots \leq \lambda _p\left( \xi \right) $ is the
ordering of eigenvalues of $M\left( \xi \right) $ respecting their
multiplicity. Denote $u_1\left( \xi \right) ,\hdots ,u_p\left( \xi\right) $ the
corresponding orthonormal eigenvectors of $M\left( \xi \right) $. Denote
also $\Xi ^{+}=\left\{ \mu \in \Xi :M\left( \mu \right) \text{ is nonsingular}\right\} $. $D$- and $A$-optimal designs are evidently localized on $\Xi ^{+}$,
what need not to be true for the $E_k$-optimality.
\begin{theorem}
\label{veta1}
We can write 
\begin{eqnarray}
\nonumber\phi _D\left( \xi \right) &=&\min_{\mu \in \Xi ^{+}}\sum_{x\in \mathcal{X}
}H_D(\mu,x) \xi \left(
x\right)  \\
&=&\min_{\mu \in \Xi ^{+}}\sum_{x\in \mathcal{X}
}\left\{ \frac{\det^{1/p}\left[ M\left( \mu \right) \right] }pf^\top \left(
x\right) M^{-1}\left( \mu \right) f\left( x\right) \right\} \xi \left(
x\right),\label{a}  \\
\nonumber \phi _A\left( \xi \text{ }\right) &=&
\min_{\mu \in \Xi ^{+}}\sum_{x\in \mathcal{X}
}H_A(\mu,x) \xi \left(
x\right) \\ 
& =&\min_{\mu \in \Xi ^{+}}\sum_{x\in 
\mathcal{X}}\left\{ \frac{\left\| M^{-1}\left( \mu \right) f\left( x\right)
\right\| ^2}{\left[tr\left( M^{-1}\left( \mu \right) \right)\right]^2 }\right\} \xi \left(
x\right)
 \label{b}
\end{eqnarray}
for any $\xi \in \Xi ^{+},$ and 
\begin{equation}
\phi _{E_k}\left( \xi \right) =\min_{\mu \in \Xi }\sum_{x\in \mathcal{X}
}H_{E_k}(\mu,x) \xi \left(
x\right)  =\min_{\mu \in \Xi }\sum_{x\in \mathcal{X}
}\left\| P^{(k)}\left( \mu \right) f\left( x\right) \right\| ^2\xi \left(
x\right)  \label{c}
\end{equation}
for any $\xi \in \Xi $. Here $P^{(k)}\left( \mu \right) $ is the $k$-dimensional
orthogonal projector $P^{(k)}\left( \mu \right) =\sum_{i=1}^ku_i\left( \mu
\right) u_i^\top\left(\mu \right) $, and $\left\| \cdot \right\|$ denotes the Euclidean norm.
\end{theorem}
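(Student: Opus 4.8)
The plan is to exploit a single structural fact shared by all three criteria: each $\phi$ is concave and positively homogeneous of degree one on its domain. For such a $\phi$, differentiable at a point $\mu$, concavity gives the supporting-hyperplane inequality $\phi(\xi)\le\phi(\mu)+\sum_{x}\frac{\partial\phi(\mu)}{\partial\xi(x)}\,[\xi(x)-\mu(x)]$, while Euler's identity for degree-one homogeneity gives $\phi(\mu)=\sum_x\frac{\partial\phi(\mu)}{\partial\xi(x)}\mu(x)$. Substituting the latter into the former, the terms involving $\mu$ cancel and one is left with $\phi(\xi)\le\sum_x\frac{\partial\phi(\mu)}{\partial\xi(x)}\xi(x)$ for every admissible $\mu$, with equality at $\mu=\xi$ (again by Euler). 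Hence, as long as $\xi$ itself lies in the set over which $\mu$ ranges,
\[
\phi(\xi)=\min_{\mu}\sum_x\frac{\partial\phi(\mu)}{\partial\xi(x)}\,\xi(x),
\]
and it only remains to identify $H(\mu,x)=\partial\phi(\mu)/\partial\xi(x)$ in each case.

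For the differentiation I would use that $M(\xi)$ is linear in $\xi$ with $\partial M(\xi)/\partial\xi(x)=f(x)f^\top(x)$, combined through the chain rule with the standard matrix derivatives. For $\phi_D=\det^{1/p}M$ one has $d\,\det^{1/p}(M)=\tfrac1p\det^{1/p}(M)\,\mathrm{tr}(M^{-1}\,dM)$, so that $H_D(\mu,x)=\tfrac1p\det^{1/p}[M(\mu)]\,\mathrm{tr}(M^{-1}(\mu)f(x)f^\top(x))=\tfrac1p\det^{1/p}[M(\mu)]\,f^\top(x)M^{-1}(\mu)f(x)$, which is (\ref{a}). For $\phi_A=1/\mathrm{tr}(M^{-1})$, using $d\,\mathrm{tr}(M^{-1})=-\mathrm{tr}(M^{-1}\,dM\,M^{-1})$ one obtains $H_A(\mu,x)=\|M^{-1}(\mu)f(x)\|^2/[\mathrm{tr}(M^{-1}(\mu))]^2$, which is (\ref{b}). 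Concavity and positive homogeneity of these two functions on $\Xi^+$ are classical (cf. \cite{P93}), and both are smooth there, so the general argument applies verbatim with $\mu$ ranging over $\Xi^+$; attainment at $\mu=\xi$ is legitimate precisely because these statements are restricted to $\xi\in\Xi^+$.

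The $E_k$ case is the delicate one, because $\phi_{E_k}=\lambda_1+\cdots+\lambda_k$ fails to be differentiable exactly when $\lambda_k(\mu)=\lambda_{k+1}(\mu)$, and there the projector $P^{(k)}(\mu)$ is itself ambiguous. Rather than force the gradient argument through eigenvalue coincidences, I would invoke the Ky Fan variational principle: the sum of the $k$ smallest eigenvalues equals $\min\{\mathrm{tr}(PM):P\text{ a rank-}k\text{ orthogonal projector}\}$, which already exhibits $\phi_{E_k}$ as a minimum of functions linear in $\xi$ and, incidentally, reproves its concavity and homogeneity. Since $\mathrm{tr}(P^{(k)}(\mu)M(\xi))=\sum_x f^\top(x)P^{(k)}(\mu)f(x)\,\xi(x)=\sum_x\|P^{(k)}(\mu)f(x)\|^2\xi(x)$, each admissible $\mu$ supplies an upper bound that is $\ge\phi_{E_k}(\xi)$ by Ky Fan, while the choice $\mu=\xi$ gives $\mathrm{tr}(P^{(k)}(\xi)M(\xi))=\sum_{i=1}^k\lambda_i(\xi)=\phi_{E_k}(\xi)$, an equality that holds for any admissible selection of the bottom-$k$ eigenspace even when that space is not uniquely determined. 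This yields (\ref{c}) with $\mu$ ranging over all of $\Xi$ and explains why no restriction to $\Xi^+$ is needed.

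The main obstacle is thus not the algebra but the non-smoothness of $\phi_{E_k}$ at degenerate spectra; the Ky Fan characterization is exactly what lets one bypass it. Throughout one must keep in mind that the minimum is actually attained inside the prescribed index set only because each representation is asserted for $\xi$ belonging to that very set.
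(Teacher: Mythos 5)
Your proof is correct, but it follows a genuinely different route from the paper's. For (\ref{a}) and (\ref{b}) you use the general machinery of concave, positively homogeneous functionals: the supporting-hyperplane inequality combined with Euler's identity collapses to $\phi(\xi)\le\sum_x\frac{\partial\phi(\mu)}{\partial\xi(x)}\xi(x)$ with equality at $\mu=\xi$, and the gradients you compute do match $H_D$ and $H_A$. This is precisely the gradient/subgradient route that the authors acknowledge in Remark~\ref{rem2} and deliberately avoid: their stated reason is that direct proofs are more accessible to practitioners, and indeed they derive the two inequalities from scratch via the AM--GM inequality applied to the eigenvalues of $S^\top M(\xi)S$ (for $D$) and the Cauchy--Schwarz inequality in the trace inner product (for $A$), which has the side benefit of establishing concavity as a byproduct (a pointwise minimum of linear functions) rather than presupposing it as a classical fact. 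For (\ref{c}) you invoke the Ky Fan variational principle for the sum of the $k$ smallest eigenvalues, whereas the paper reproves exactly that principle from first principles by introducing the weights $w_i=\|P u_i(\xi)\|^2$, showing $\sum_i w_i=k$ and $w_i\in[0,1]$, and minimizing $\sum_i\lambda_i(\xi)w_i$; your observation that equality at $\mu=\xi$ survives degenerate spectra for any admissible choice of the bottom-$k$ eigenspace is correct and is implicitly what makes the paper's argument work too. In short: your argument buys brevity and generality at the price of citing nontrivial classical results (concavity of $\det^{1/p}$ and of $1/\mathrm{tr}(M^{-1})$, Ky Fan), while the paper's buys self-containedness and elementarity at the price of three separate ad hoc computations.
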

\begin{proof}
In the proof we shall often use that $tr\left[ AB\right]
=tr\left[ BA\right] $ for any matrices $A=A_{l\times s},B=B_{s\times l}$ \citep{H00}. By
the known inequality between the geometric and arithmetic means of positive
numbers (cf. \cite[Chap.~2]{S04}), we obtain
\[
\left\{ \det \left[ S^\top M\left( \xi \right) S\right] \right\} ^{1/p}=\left\{
\Pi _{i=1}^p\alpha _i\right\} ^{1/p}\leq \frac 1p\sum_{i=1}^p\alpha _i=\frac
1p tr\left[ S^\top M\left( \xi \right) S\right] 
\]
for any nonsingular $p\times p$ matrix $S$. Here $\alpha _1,\hdots,\alpha _p$ are the
eigenvalues of $S^\top M\left( \xi \right) S$. So $\det^{1/p}\left[ M\left( \xi
\right) \right] \leq \frac 1p\det^{-1/p}[SS^\top]\sum_{x\in \mathcal{X}
}f^\top\left( x\right) SS^\top f\left( x\right) \xi \left( x\right) $, and we have
just to put $S=M^{-1/2}\left( \mu \right) $ to obtain the expression in~(\ref
{a}). If $S=M^{-1/2}\left( \xi \right) $, then $\alpha_i=1;$ $i=1\hdots p$, and
the geometric mean is equal to the arithmetic mean, so the minimum is
attained.

For any nonsingular $p\times p$ matrix $S$ we obtain from the Schwarz inequality 
\begin{eqnarray*}
\left[ tr\left( S\right) \right] ^2 &=&\left\{ tr\left[ M^{-1/2}\left( \xi
\right) SM^{1/2}\left( \xi \right) \right] \right\} ^2 \\
&\leq &tr\left[ M^{-1}\left( \xi \right) \right] tr\left[ M^{1/2}\left( \xi
\right) S^\top  S M^{1/2}\left( \xi \right) \right] =tr\left[ M^{-1}\left( \xi
\right) \right]tr\left[ SM\left( \xi \right) S^\top \right] 
\end{eqnarray*}
since in general $tr\left[ A^\top B\right] $ is a scalar product of matrices $A,B
$, and since $M^{-1/2}\left( \xi \right) $ and $M^{1/2}\left( \xi \right) $
are symmetric matrices. So $\left\{ tr\left[ M^{-1}\left( \xi \right)
\right] \right\} ^{-1}\leq tr\left[ SM\left( \xi \right) S^\top \right] /\left[
tr\left( S\right) \right] ^2=\sum_x\left\| Sf\left( x\right) \right\| ^2\xi
\left( x\right) /\left[ tr\left( S\right) \right] ^2$, and we have just to
put $S=M^{-1}\left( \mu \right) $ to obtain the expression in~(\ref{b}).
When $S=M^{-1}\left( \xi \right) $, we obtain evidently an equality in the
Schwarz inequality.

Denote $P=P^{(k)}\left( \mu \right)$. By the definition of  $
P^{(k)}\left( \mu \right) $ we have $PP=P$ and $P=P^\top$. So 
\[
\sum_{x\in \mathcal{X}}\left\| Pf\left( x\right) \right\| ^2\xi \left(
x\right) =tr\left[ PM\left( \xi \right) P\right]. 
\]

On the other hand, denote $U=\left( u_1\left( \xi \right) ,\hdots,u_p\left( \xi
\right) \right) ,\Lambda =diag\left\{ \lambda _1\left( \xi \right)
,\hdots,\lambda _p\left( \xi \right) \right\} $, and use that $M\left( \xi
\right) =U\Lambda U^\top$ to obtain 
\begin{eqnarray*}
tr\left[ PM\left( \xi \right) P\right] &=&tr\left[ PU\Lambda U^\top P\right]
=tr\left[ \Lambda \left( PU\right) ^\top \left( PU\right) \right] \\
&=&\sum_{i=1}^p\lambda _i\left( \xi \right) \left\{ \left( PU\right)
^\top \left( PU\right) \right\} _{ii}=\sum_{i=1}^p\lambda _i\left( \xi \right)
\left\| Pu_i\left( \xi \right) \right\| ^2=\sum_{i=1}^p\lambda _i\left( \xi
\right) w_i,
\end{eqnarray*}
where we denoted $w_i=\left\{ \left( PU\right) ^\top \left( PU\right) \right\}
_{ii}=\left\| Pu_i\left( \xi \right) \right\| ^2$. Since $UU^\top =U^\top U=I$, we
have 
\[
k=tr\left[ P\right] =tr\left[ P^\top P\right] =tr\left[ P^\top PUU^\top \right]
=\sum_{i=1}^p\left\{ \left( PU\right) ^\top \left( PU\right) \right\}
_{ii}=\sum_{i=1}^pw_i .
\]
Further $w_i\in [0,1]$, since $0\leq \left\| Pu_i\left( \xi
\right) \right\| ^2\leq \left\| u_i\left( \xi \right) \right\| ^2=1$. So,
using that $\lambda _1\left( \xi \right) \leq \hdots\leq \lambda _p\left( \xi
\right) $ we obtain that $\sum_{i=1}^p\lambda _i\left( \xi \right) w_i$ is
minimized exactly when the weights $w_i$ have maximum value $\left(
=1\right) $ at the smallest $k$ values of $\lambda _i\left( \xi \right) $.

 Summarizing we obtain 
\begin{equation}
\sum_{x\in \mathcal{X}}\left\| Pf\left( x\right) \right\| ^2\xi \left(
x\right) =tr\left[ PM\left( \xi \right) P\right] =\sum_{i=1}^p\lambda
_i\left( \xi \right) w_i\geq \sum_{i=1}^k\lambda _i\left( \xi \right) =\phi
_{E_k}\left( \xi \right) .
\label{c2}
\end{equation}
In the particular case that $P=P^{(k)}\left( \xi \right) =\sum_{j=1}^ku_j\left(
\xi \right) u_j^\top \left( \xi \right) $ we have $w_i=\left\| P^{(k)}\left( \xi
\right) u_i\left( \xi \right) \right\| ^2=\left\| u_i\left( \xi \right)
\right\| ^2=1$ if $i\leq k$, $\left\| P^{(k)}\left( \xi \right) u_i\left( \xi
\right) \right\| ^2=0$ if $i>k$, hence $\sum_{x\in \mathcal{X}}\left\|
P^{(k)}\left( \xi \right) f\left( x\right) \right\| ^2\xi \left( x\right)
=\sum_{i=1}^k\lambda _i\left( \xi \right) =\phi _{E_k}\left( \xi \right)$, which together with (\ref{c2}) yields an expression in (\ref{c}).

\end{proof}

\begin{remark}
\label{rem1}
 We could write in~(\ref{b}) $\phi _A\left( \xi \right) =\min_{B\in \mathcal{B}}\sum_{x\in \mathcal{X}}\left\{ \frac{\left\|
Bf\left( x\right) \right\| ^2}{\left[tr\left( B\right)\right]^2 }\right\} \xi \left(
x\right) $, where $\mathcal{B}$ is any set of nonsingular
matrices containing $M^{-1}\left( \xi \right) .$ When this formula should
hold for all $\xi \in \Xi $, then the set $\mathcal{B=}\left\{ M^{-1}\left(
\mu \right) :\mu \in \Xi ^{+}\right\} $ is the smallest of such sets. A
similar modification could be done for $D$-optimality in~(\ref{a}). In~(\ref{c}) we could minimize over any set of $k$-dimensional projectors containing $
P^{(k)}\left( \xi \right) $.
\end{remark}

\begin{remark}
\label{rem2}
As follows from \cite[Chap.~9.5]{PP13} we could obtain similar results
as in Theorem~\ref{veta1} by considering gradients or subgradients of $\phi \left( \xi
\right) $. However, the presented direct proofs, without using a not very
common notion of subgradients, can be more attractive for people in
applications.
\end{remark}

\section{The iterative computation by LP; the algorithms and examples}
\label{sec:sec2}
\subsection{Algorithm for $D$-, $A$-, and $E_k$-optimality}
\label{alg1}
Let us write $H(\mu,x )$ instead of $H_D(\mu,x)$, $H_A(\mu,x)$, or $H_{E_k}(\mu,x)$  from Theorem~\ref{veta1}. 
For the maximization of $\phi$ we apply a modification of the cutting-plane method \cite{K60} as presented in \cite{PP13} and \cite{PP14}:

\begin{enumerate}[start=0]
\item Take any vector $\xi^{(0)}$ such that $\sum_{x\in\mathcal{X}}\xi^{(0)}(x)=1$ and $\xi^{(0)}(x)\geq 0\;\forall\;x\in\mathcal{X}$, choose $\epsilon>0$, set $\Xi^{(0)}=\emptyset $ and $n=0$.
\item Set $\Xi^{(n+1)}= \Xi^{(n)} \cup \left\lbrace\xi^{(n)}\right\rbrace$.
\item Use the LP solver to find $\left(\xi^{(n+1)},t^{(n+1)}\right)$ so to maximize $t$ satisfying the constraints:
 \begin{itemize}
 \item $t>0,\; \xi(x)\geq0 \; \forall \; x\in\mathcal{X} ,\; \sum_{x\in\mathcal{X}}\xi(x)=1,$ 
 \item $\sum_{x\in\mathcal{X}} H(\mu,x)\xi(x)\geq t\; \forall \mu\in\Xi^{(n+1)}.$
 \end{itemize}
 \item Set $\Delta^{(n+1)}=t^{(n+1)}-\phi\left(\xi^{(n+1)}\right)$, if $\Delta^{(n+1)}<\epsilon$ take $\xi^{(n+1)}$ as an $\epsilon$-optimal design and stop, or else $n\leftarrow n+1$ and continue by step 1.
 \end{enumerate}

Notice that  $\min_{\mu\in\Xi^{(n+1)}}\sum_{x\in\mathcal{X}}H(\mu,x)\xi(x)$ is an upper piecewise linear approximation of $\phi(\xi)$. Increasing $n$, the set $\Xi^{(n+1)}\subseteq\Xi$ becomes larger and the approximation is better.

On the other hand, when $n$ is small, the information matrix $M\left(\xi^{(n)}\right)$ could be ill-conditioned or even singular. In order to avoid the difficulty with  inverse matrices  in $D$- and $A$-optimality, it is possible to use any symmetric positive definite matrix as a substitute for $M\left(\xi^{(n)}\right)$ as justified in Remark~\ref{rem1}. Alternatively, \cite{PP13} recommend the regularization $M\left(\xi^{(n)}\right)+\gamma I$, where $\gamma$ is a small positive number and $I$ is the identity matrix.
Note that it is also possible to take  $\Xi^{(0)}$ as an nonempty set containing $s\geq1$ initial designs. If $s$ or $n$ is large, the probability of ill-conditioned or singular information matrix $M\left(\xi^{(n)}\right)$ is less.

 The problem of singular information matrix does not appear in $E_k$-optimality criteria.

The stopping rule used in the above algorithm follows from the upper and lower bounds for $\max_{\xi\in\Xi}\phi(\xi)$:
$$
\phi\left(\xi^{(n+1)}\right)\leq \max_{\xi\in\Xi}\phi(\xi) \leq t^{(n+1)}.
$$
The first inequality is obvious.
 Note that $t^{(n+1)}=\max_{\xi\in \Xi}\min_{\mu\in\Xi^{(n+1)}}\sum_{x\in\mathcal{X}}H(\mu,x)\xi(x)$, while $\max_{\xi\in\Xi}\phi(\xi)=\max_{\xi\in\Xi}\min_{\mu\in\Xi}\sum_{x\in\mathcal{X}}H(\mu,x)\xi(x)$, and $\Xi\supseteq \Xi^{(n+1)}$.
This yields the second inequality.

There are also available stopping rules based on the equivalence theorem \citep{K74,KW59}, which are considered as standard. Let $\epsilon_{stop}$ be a chosen small nonnegative number.
An iterative algorithm will stop if $d\left(\xi^{(n)}\right)<\epsilon_{stop}$, where for $D$-optimality $d\left(\xi^{(n)}\right)=\left|\max_{x\in\mathcal{X}}f^\top(x)M^{-1}\left(\xi^{(n)}\right)f(x)-p\right|$ and  for the criterion of $A$-optimality
 $d\left(\xi^{(n)}\right)=\left|\max_{x\in\mathcal{X}}f^\top(x)M^{-2}\left(\xi^{(n)}\right)f(x)-tr\left[M^{-1}\left(\xi^{(n)}\right)\right]\right|$ as seen e.g. in \cite{K74,K75}. According to \cite{H04} the stopping rule for $E_k$-optimality criteria is 
 $d\left(\xi^{(n)}\right)=\left|\phi_{E_k}\left(\xi^{(n)}\right)- \max_{x\in\mathcal{X}} \sum_{i=1}^k \left[f^\top (x) u_i\left(\xi^{(n)}\right)\right]^2\right|$, which can be used only if  $\lambda_k\left(\xi^{(n)}\right)<\lambda_{k+1}\left(\xi^{(n)}\right)$.

As mentioned in \cite[Chap.~9.5]{PP13}, the cutting-plane method can have bad convergence properties (referenced to \cite{B06,N04}), one can then use the level method (see \cite{N04} or \cite{PP13}), which adds the quadratic programming step in the method of cutting planes.

In the examples below we compare the known optimal designs with results of our algorithm. The computations were performed in Matlab on a bi-processor PC (3.10 Ghz) equipped with 6GB of RAM and with 64 bits Windows 8.1. LP problems were solved with interior point method.

\begin{example}
\label{ex1}
Consider the nonlinear regression model of \cite{A93}.
\begin{equation*}
\eta(x,\theta)=\theta_1\left[exp(-\theta_2x)-exp(-\theta_3x)\right],\; x\in\mathbb{R}^+,\; \theta=(\theta_1,\theta_2,\theta_3)^\top.
\end{equation*}
 We use the algorithm of Sec.~\ref{alg1} to compute local $D$- and $E_1$-optimal designs for the nominal value of the parameter $\theta^0=(21.8,0.05884,4.298)^\top$, so we shall write $\partial\eta(x,\theta)/\partial\theta|_{\theta_0}$ instead of $f(x)$ everywhere.
We take a finite design space containing 24,000 points $\mathcal{X}=\{ 0.001, 0.002,\hdots,23.999,24.000\}$, $\epsilon=10^{-10}$ with $\xi^{(0)}(x)=1/3$ if $x\in \{0.2,1,23\}$ and $\xi^{(0)}(x)=0$ otherwise. The computed designs are given in Table~\ref{tabpr1}. Notice that the computed results correspond to those in \cite{A93}. 
\begin{table}[h!]
\centering
\begin{tabular}{cccccc}
  \hline
  $\phi$ & $\xi^*$ &$\phi^*$&iter. & time& $d(\xi^*)$\\
  \hline
   $D$ & \begin{tabular}{ccc}
  $0.229$&$1.389$&$18.417$\\
  $0.3333$&$0.3333$&$0.3333$\\
  \end{tabular}&$\phi_D^*=11.739$& 64 &16m 9s&$1.5 \cdot 10^{-5}$\\
    \hline \text{$E_1$} & \begin{tabular}{cccc}
  $0.169$&$1.394$&$23.402$&$23.403$\\
  $0.1993$&$0.6623$&$0.0415$&$0.0969$\\
  \end{tabular}&$\phi_{E_1}^*=0.3163$& 49 &5m 53s&$3.89\cdot 10^{-6}$\\
  \hline   
\end{tabular}
\caption{Example~\ref{ex1}: the locally optimal designs are $\xi^*_D$ and $\xi^*_{E_1}$ (column 2); $\phi^*_D=\phi_D(\xi^*_D)$ and $\phi_{E_1}^*=\phi_{E_1}(\xi^*_{E_1})$ (column 3); the number of iterations (column 4) and the computational time (column 5) required until the algorithm stopped; the  corresponding value of $d(\xi^*)$ based on the equivalence theorem (column 6).}
\label{tabpr1}
\end{table}
\end{example}

\subsection{Algorithm for computing criterion robust designs}
\label{alg2}

The criteria of $E_k$-optimality play
a special role in experimental design. We say that the design $\xi $ is not
worse than the design $\mu $ with respect to the Schur ordering of designs
if $\phi _{E_k}\left( \xi \right) \geq \phi _{E_k}\left( \mu \right) $ for
all $k=1,\hdots,p$. Then also $\phi \left( \xi \right) \geq \phi \left( \mu
\right) $ for many other optimality criteria. However, the Schur ordering is
a partial ordering of designs, and a Schur-optimal design exists only in some
very particular cases. On the other hand, if we denote by $\mathcal{O}$ the
set of all criteria functions $\phi \left( \xi \right) $, which are concave
and positive homogeneous, and moreover are orthogonally invariant in the
sense that $\phi \left( \xi \right) =\Phi \left[ M\left( \xi \right) \right] 
$ with $\Phi \left[ M\left( \xi \right) \right] $=$\Phi \left[ U^\top  M\left( \xi \right)  U\right] $ for any
orthogonal matrix $U,$ it makes sense to look for a design $\xi_{ef}$ which is
maximin efficient with respect to such criteria, i.e. 
\[
\xi_{ef} =\arg \max_{\xi \in \Xi }\min_{\phi \in \mathcal{O}}\left[ \frac{\phi
\left( \xi \right) }{\max_{\zeta \in \Xi }\phi \left( \zeta \right) }\right]. 
\]
Here the ratio $\frac{\phi \left( \xi \right) }{\max_{\zeta \in \Xi }\phi
\left( \zeta \right) }$ is called the $\phi$-efficiency of the design $\xi $. This maximin efficiency problem can be simplified (cf. \cite{H04}), the
solution $\xi_{ef}$ coincides with the solution of 
\[
\xi_{ef} =\arg \max_{\xi \in \Xi }\min_{1\leq k\leq p}\left[ \frac{\phi
_{E_k}\left( \xi \right) }{\max_{\zeta \in \Xi }\phi _{E_k}\left( \zeta
\right) }\right] ,
\]
i.e. with the design which is maximin efficient in the (finite) class of all
$E_k$-optimality criteria. Such a design is called also ``criterion robust''
in \cite{H04}. But even this problem is computationally difficult, mainly because the
$E_k$-optimality criteria are not differentiable. For us it is important that we can
approach the solution of this problem by the LP programming technique.
First, using Theorem~\ref{veta1} we compute $E_k\left( opt\right) =\max_{\zeta \in \Xi
}\phi _{E_k}\left( \zeta \right) $ for all $k$ (see Sec.~\ref{alg1}), and then we can formulate another ``infinite-dimensional'' LP problem: 
 to choose the values of $\xi(x);\;x\in\mathcal{X}$ and of $t\in\mathbb{R}$ so to maximize $t$ under linear constraints:
 \begin{eqnarray*}
\sum_{x\in \mathcal{X}}\frac{ H_{E_k}(\mu,x)}{E_k(opt)} \xi\left( x\right)  &\geq & t\text{  for any }\mu \in \Xi ^{+} \text{ and for every } k\in\{1,\hdots,p\},\\
\xi \left( x \right) &\geq & 0 \text{ for any } x\in\mathcal{X}, \text{ and } \sum_{x\in \mathcal{X}}\xi \left( x\right)  =1. 
\end{eqnarray*}
 
  In order to compute the maximin efficient design, the  algorithm of Sec.~\ref{alg1} needs to be modified in step 2. Actually, the constraints in the LP problem will be:

\begin{itemize}
 \item $t>0,\; \xi(x)\geq0 \; \forall \; x\in\mathcal{X},\;\sum_{x\in\mathcal{X}}\xi(x)=1, $ 
 \item $\sum_{x\in\mathcal{X}} \frac{H_{E_k}(\mu,x)}{E_k(opt)}\xi(x)\geq t\; \forall \mu\in\Xi^{(n+1)}$ and $\forall\;k=1,\hdots,p$,
\end{itemize}
where  $E_k(opt)=\max_{\zeta\in\Xi}\phi_{E_k}(\zeta)$ is computed using the unmodified algorithm of Sec.~\ref{alg1} for all $k=1,\hdots,p$.

\begin{example}
\label{ex2}
Consider the quadratic regression model on a $q$-dimensional cube:
\begin{equation}
\label{pr2}
y=\beta_0+\sum_{i=1}^q\beta_ix_i^2+\sum_{i=1}^q\beta^{(i)}x_i+\sum_{i<j}\beta_{ij}x_ix_j+\varepsilon,\; x=(x_1,\hdots,x_q)^\top\in [-1,1]^q
\end{equation}

with a parameter $\beta=(\beta_0,\beta_1,\hdots,\beta_q,\beta^{(1)},\hdots ,\beta^{(q)},\beta_{12},\hdots,\beta_{q-1,q})^\top$ of dimension $p=1+3/2q+q^2/2$.
The criterion robust design in the model~(\ref{pr2}) was analytically studied for $q=1$ in \cite{H04} and for $q=2$ in \cite{FH13}. The case of $q=3$ was numerically solved in \cite{FH13}.

Consider the set $C_i=\lbrace x\in\{-1,0,1\}^q:\;\sum_{j=1}^q|x_j|=i \rbrace$ for $i=0,1.\hdots q$. Thus, $C_0=\lbrace(0,\hdots,0)^\top\rbrace$ and $C_q$ is the set of all vertices of the $q$-dimensional cube. We shall denote $C=\bigcup_{i=0}^qC_i$ and $\xi(C_i)=\sum_{x\in C_i}\xi(x)$. As mentioned in \cite{FH13}, for every $\phi \in \mathcal{O}$ there exists a $\phi$-optimal design $\xi^*$ with support on $C$, such that for all $i=0,1,\hdots,q$ the measure $\xi^*(C_i)$  is uniformly distributed over points $x\in C_i$ (see also \cite{G87,H92}).

 Before computing the criterion robust designs, we needed to evaluate $E_k(opt)$ for $k=1,\hdots p$. The algorithm of Sec.~\ref{alg1} initialized with the uniform measure on $\mathcal{X}=C$ and with $\epsilon=10^{-10}$  gave the optimal values $E_k(opt)$ summarized in Table~\ref{tabpr2a} for $q=1,2,3, 4$. We observed the same optimal designs as calculated in \cite{H04} for $q=1$ and in \cite{FH13} for $q=2,3$.
  
 \begin{table}[h!]
\centering
\begin{tabular}{cccc}
  \hline
 $q$ & & &time\\
  \hline
  
   $1$ & \begin{tabular}{c} $k$ \\ $E_k(opt)$  \end{tabular} &\begin{tabular}{ccc}
  1&2&3\\ 
  $0.2$&$1$&$3$\\
    \end{tabular} &2s\\
    \hline  
   $2$ & \begin{tabular}{c} $k$ \\ $E_k(opt)$  \end{tabular} &\begin{tabular}{cccc}
  1&2&3 to 5&6\\ 
  $0.2$&$0.407$&$k-2$&$6$\\
  \end{tabular} &17s\\
    \hline
   $3$ & \begin{tabular}{c} $k$ \\ $E_k(opt)$  \end{tabular} &\begin{tabular}{cccccc}
  1&2&3&4&5 to 9&10\\ 
  $0.2$&$0.4$&$0.667$&$1.027$&$k-3$&$10$\\
    \end{tabular} &28m 17s\\
     \hline 
  $4$  & \begin{tabular}{c} $k$ \\ $E_k(opt)$  \end{tabular} &\begin{tabular}{ccccccc}
  1&2&3&4&5&6 to 14&15\\ 
  $0.2$&$0.0433$&$0.6242$&$0.4834$&$1.250$&$k-4$&15\\
   \end{tabular}  &23h 32m 47s\\
     \hline   
\end{tabular}
\caption{
Example~\ref{ex2}: the optimal values of the $E_k$-optimality criteria on a $q$-dimensional cube for the model~(\ref{pr2}) and the total computational time required until the optimal values  for all $k=1,\hdots ,p$ together were evaluated.}
\label{tabpr2a}
\end{table}
Then using the algorithm of Sec.~\ref{alg2} we computed criterion robust designs on $\mathcal{X}=C$ for $q=1,2,3,4$ obtaining the same results (except $q=4$) as in \cite{H04, FH13}, and the optimal mass concentrated on $C_i$ is listed in Table~\ref{tabpr2b}. Note, that for $q=3$ and $q=4$ the optimal design $\xi^*$ computed by algorithm of Sec.~\ref{alg2} does not put mass uniformly among $x\in C_i$ with $i=0,\hdots q$.
% but, as mentioned above, such design exists.
 By redistributing the mass $\xi^*(C_i)$ uniformly over $x\in C_i$ for $i=0,\hdots, q$, we obtained new design $\xi^{**}$ of identical  $\mathcal{O}$-minimal efficiency as achieved in $\xi^*$. Thus,  $\xi^{**}$ is another criterion robust design with required uniform measure on $C_i$ for any $i=0,\hdots q$. 
\begin{table}[h!]
\centering
\begin{tabular}{ccccc}
  \hline
  $q$& $\xi^*$ &$\Psi^*$&iter. & time\\
  \hline
 
  $1$ & \begin{tabular}{cc}
  $\xi^*(C_0)$&$\xi^*(C_1)$\\
  $0.3532$&$0.6468$\\
  \end{tabular}&$0.7646$& 16&1s\\
   \hline
  $2$ & \begin{tabular}{ccc}
  $\xi^*(C_0)$&$\xi^*(C_1)$&$\xi^*(C_2)$\\
  $0.1775$&$0.2924$&$0.5304$\\
  \end{tabular}&$0.7060$& 108&1m 20s\\
  \hline
  $3$ & \begin{tabular}{cccc}
  $\xi^*(C_0)$&$\xi^*(C_1)$&$\xi^*(C_2)$&$\xi^*(C_3)$\\
  $0.0884$&$0.2343$&$0.2306$&$0.4467$\\
  \end{tabular}&$0.6642$&464&17m 9s\\
  \hline
  $4$ & \begin{tabular}{ccccc}
  $\xi^*(C_0)$&$\xi^*(C_1)$&$\xi^*(C_2)$&$\xi^*(C_3)$&$\xi^*(C_4)$\\
  $0.1097$&$0.0559$&$0.1437$&$0.3062$&$0.3845$\\
  \end{tabular}&$0.6526$&1453&10h 20m 7s\\
  \hline   
\end{tabular}
\caption{Example~\ref{ex2}: criterion robust designs $\xi^*$ (column 2) and the $\mathcal{O}$-minimal efficiency of $\xi^*$, i.e. $\Psi^*=\max_{\xi\in\Xi}\min_k\frac{\phi_{E_k}(\xi^*)}{E_k(opt)}$ (column 3) on a $q$-dimensional cube for the model~(\ref{pr2});  the number of iterations (column 4)  and the computational time (column 5) required until the algorithm stopped.}
\label{tabpr2b}
\end{table}

Alternatively, we  computed the criterion robust design for $q=2$ (thus $p=6$)  on a modified design space 
$\mathcal{X}=\{-1,-0.95,\hdots ,0.9,0.95,1\}^2$ (i.e. $\mathcal{X}$ is a grid consisting of 1,681 two-dimensional points including the set $C$). Assuming that the values $E_k(opt)$ are known  or previously computed for all $k \in{1,\hdots,p}$, the algorithm of Sec.~\ref{alg2} initialized with uniform measure on $\mathcal{X}$ and  $\epsilon=10^{-10}$ converged after 102 iterations in 36m  and 5s with the same results as given in Table~\ref{tabpr2b}.
\end{example}

\subsection{Algorithm for $D$-optimality conditioned by prescribed $A$-optimality}
\label{alg3}

It is not difficult to see that in the considered LP
problems we can easily add some supplementary constraints linear in $\xi $,
say a cost constraint $\sum_{x\in \mathcal{X}}c\left( x\right) \xi \left(
x\right) =c$, where $c\left( x\right) $ is the cost of an observation at $x$
and $c$ is proportional to the total cost allowed for the whole experiment.
What is less evident is that we can combine optimality criteria. Say, when
we want to obtain a $D$-optimal design under the condition that the
$A$-optimality criterion attains a prescribed value $a$, we have to solve the
``infinite-dimensional'' LP problem: to choose the values of $\xi(x);\;x\in\mathcal{X}$ and of $t\in\mathbb{R}$ so to maximize $t$ under linear constraints:
\begin{eqnarray*}
\sum_{x\in \mathcal{X}} H_D(\mu,x)\xi \left( x\right)  &\geq & t\text{  for any }\mu \in \Xi ^{+},\\
\sum_{x\in \mathcal{X}} H_A(\mu,x)\xi \left( x\right)  &\geq & a\text{  for any }\mu \in \Xi ^{+}, \\
\xi \left( x \right) &\geq & 0 \text{ for any } x\in\mathcal{X}, \text{ and } \sum_{x\in \mathcal{X}}\xi \left( x\right)  =1. 
\end{eqnarray*}
This problem can be solved by the algorithm of Sec.~\ref{alg1} with a modification in constraints of the LP problem and in the stopping rule. 

\begin{enumerate}[start=0]
\item Take any vector $\xi^{(0)}$ such that $\sum_{x\in\mathcal{X}}\xi^{(0)}(x)=1$ and $\xi^{(0)}(x)\geq 0\;\forall\;x\in\mathcal{X}$, choose $\epsilon_D>0$, $\delta_A \approx 0$ , set $\Xi^{(0)}=\emptyset $ and $n=0$.
\item Set $\Xi^{(n+1)}= \Xi^{(n)} \cup \left\lbrace\xi^{(n)}\right\rbrace$.
\item Use the LP solver to find $\left(\xi^{(n+1)},t^{(n+1)}\right)$ so to maximize $t$ satisfying the constraints:
\begin{itemize}
 \item $t>0,\; \xi(x)\geq0 \; \forall \; x\in\mathcal{X}, \; \sum_{x\in\mathcal{X}}\xi(x)=1 ,$ 
 \item $\sum_{x\in\mathcal{X}} H_D(\mu,x)\xi(x)\geq t\; \forall \mu\in\Xi^{(n+1)},$
 \item $\sum_{x\in\mathcal{X}} H_A(\mu,x)\xi(x)\geq a\; \forall \mu\in\Xi^{(n+1)}.$
 \end{itemize}
 \item Set $\Delta_D^{(n+1)}=t^{(n+1)}-\phi_D\left(\xi^{(n+1)}\right)$ and $\Delta_A^{(n+1)}=\phi_A\left(\xi^{(n+1)}\right)-a$. If $\Delta_D^{(n+1)}<\epsilon_D $ and $ \Delta_A^{(n+1)}>\delta_A $ take $\xi^{(n+1)}$ as an $(\epsilon_D,\delta_A)$-optimal design and stop, or else $n\leftarrow n+1$ and continue by step 1.

 \end{enumerate}

The constant $\delta_A$ is chosen at the beginning of the algorithm. The preferred value is $\delta_A=0$, however choosing $\delta_A<0$ but small, we can reduce the strictness of the condition on $A$-optimality. 

Now consider the set $\mathcal{A}^{(n)}=\{\xi\in\Xi: \sum_{x\in\mathcal{X}}H_A(\mu,x)\xi(x)\geq a\; \forall \mu \in \Xi^{(n)}\}$, then $\mathcal{A}^{(n)} \supset \mathcal{A}^{(n+1)} \supset \mathcal{A}=\{\xi\in\Xi: \phi_A(\xi)\geq a\}$. So the exact  solution of our problem would be $\xi^*=\arg\max_{\xi \in \mathcal{A}}\phi_D(\xi)$. We  can write:
$$
\begin{aligned}
t^{(n+1)}&=\max_{\xi \in \mathcal{A}^{(n+1)}}\min_{\mu\in\Xi^{(n+1)}}\sum_{x\in\mathcal{X}}H_D(\mu,x)\xi(x)\\
&\geq \max_{\xi\in\mathcal{A}^{(n+1)}}\min_{\mu \in \Xi}\sum_{x\in\mathcal{X}}H_D(\mu,x)\xi(x)=\max_{\xi\in\mathcal{A}^{(n+1)}} \phi_D(\xi),
\end{aligned}
$$
and then
\begin{equation}
\label{d1}
\max_{\xi \in \mathcal{A}}\phi_D(\xi)\leq \max_{\xi \in \mathcal{A}^{(n+1)}}\phi_D(\xi)\leq t^{(n+1)},
\end{equation}
\begin{equation}
\label{d2}
\phi_D\left(\xi^{(n+1)}\right)\leq \max_{\xi \in \mathcal{A}^{(n+1)}}\phi_D(\xi)\leq t^{(n+1)},
\end{equation}
where
$$
\xi^{(n+1)}=\arg \max_{\xi \in \mathcal{A}^{(n+1)}}\min_{\mu\in\Xi^{(n+1)}}\sum_{x\in\mathcal{X}}H_D(\mu,x)\xi(x).
$$
Assume that $\delta_A=0$ and the algorithm stopped, i.e. $t^{(n+1)}-\phi_D\left(\xi^{(n+1)}\right)<\epsilon_D$ and $\phi_A\left(\xi^{(n+1)}\right)\geq a$. 
According to~(\ref{d1}) and~(\ref{d2}) there are only two possibilities: first, if $\max_{\xi \in \mathcal{A}}\phi_D(\xi)\leq \phi_D\left(\xi^{(n+1)}\right)\leq t^{(n+1)}$, then  $\xi^{(n+1)}\in\mathcal{A}^{(n+1)}$ is even ``better'' design than we expected; second, $ \phi_D\left(\xi^{(n+1)}\right)\leq\max_{\xi \in \mathcal{A}}\phi_D(\xi)\leq t^{(n+1)}$, and the stopping rule implies that $\max_{\xi \in \mathcal{A}}\phi_D(\xi)- \phi_D\left(\xi^{(n+1)}\right)<\epsilon_D$, thus $\xi^{(n+1)}$ is an $\epsilon_D$-optimal design in both cases.

\begin{example}
\label{ex3}
Consider the polynomial regression model of degree $d$:
$$
{y}=\theta_0+\theta_1 x+\theta_2 x^2 + \hdots +\theta_d x^d + \varepsilon,\; x\in [-1,1],\; \theta=(\theta_0,\theta_1\hdots,\theta_d)^\top.
$$
Denote by $\xi^*_{D|a}$ the $D$-optimal design under the condition that the $A$-criterion exceeds a value $a$. 
Set $\mathcal{X}=\{-1.00,-0.99,-0.98,\hdots,0.99,1.00\}$  as the  design space, suppose that the initial design $\xi^{(0)}$ allocates the unit mass uniformly to each $x \in \mathcal{X}$,  $\epsilon_D=10^{-10}$, and $\delta_A=0$. 
In Table~\ref{tabpr3} are given optimal designs $\xi^*_{D|a}$ for some particular values of $a$ and for $d=4$ computed by the algorithm of Sec.~\ref{alg3} with abovementioned setting. Notice that the $D$- and $A$-optimal (maximum) values are $\phi_D^*=0.1339$ and $\phi_A^*=0.0053$ respectively (see the optimal designs in polynomial regression in \cite[Chap.~11]{AD92} and \cite{PT91}). When $a$ is small, the algorithm of Sec.~\ref{alg3} will compute the $D$-optimal design. The initial knowledge of $\phi_A^*$ is necessary  because if  $a$  exceeds $\phi_A^*$, the algorithm does not work.
 Figure~\ref{Obr} displays $\phi_D$ and $\phi_A$ efficiencies of $\xi^*_{D|a}$ as a function of $a$, i.e. $\text{eff}_D(a)=\phi_D(\xi^*_{D|a})/\phi^*_D$ and $\text{eff}_A(a)=\phi_A(\xi^*_{D|a})/\phi^*_A$.

\begin{table}[h!]
\centering
\begin{tabular}{cccccc}
  \hline
  $a$& $\xi^*_{D|a}$ &$\phi^*_{D|a}$&$\phi_A(\xi^*_{D|a})$&iter. & time\\
  \hline
   0.0052 & \begin{tabular}{ccccc}
  $-1$&$-0.68$&$0$&$0.68$&$1$\\
  $0.136$&$0.2338$&$0.2604$&$0.2338$&$0.136$\\
  \end{tabular}&$0.1283$&$0.0052$& 97 & 67s\\
  \hline
   0.005 & \begin{tabular}{ccccc}
  $-1$&$-0.68$&$0$&$0.68$&$1$\\
  $0.1623$&$0.2194$&$0.2366$&$0.2194$&$0.1623$\\
  \end{tabular}&$0.1317$&$0.005$& 97 & 53s\\
  \hline
 0.002 & \begin{tabular}{ccccccc}
  $-1$&$-0.66$&$-0.65$&$0$&$0.65$&$0.66$&$1$\\
  $0.2$&$0.0847$&$0.1152$&$0.2$&$0.1151$&$0.0850$&$0.2$\\
  \end{tabular}&$0.1338$&$0.0044$& 163 & 158s\\
  \hline
\end{tabular}
\caption{Example~\ref{ex3}: the optimal designs $\xi^*_{D|a}$ (column 2) with different choices of $a$ (column 1); $\phi^*_{D|a}=\phi_D(\xi^*_{D|a})$- the value of the $D$-optimality criterion (column 3); $\phi_A(\xi^*_{D|a})$- the value of the $A$-optimality criterion (column 4); the number of iterations (column 5) and the computational time (column 6) required until the algorithm stopped.}
\label{tabpr3}
\end{table}

\begin{figure}[h!]
    \centering
    \includegraphics[width=0.4\textwidth, angle =270 ]{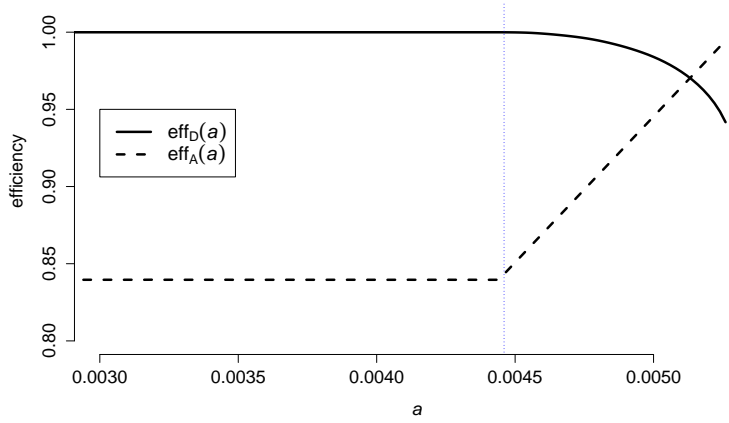}
    \caption{The graph of $\phi_A$-efficiency (dashed line) and of $\phi_D$-efficiency (solid line) of $\xi^*_{D|a}$ as a function of prescribed value $a$ in Example~\ref{ex3}.}
    \label{Obr}
\end{figure}
\end{example}

\section{Reformulation of AVE criteria in nonlinear experiments}
\label{sec:sec3}
In general, the information matrix in nonlinear regression model ${y}=\eta_X(\theta)+\varepsilon$ is a function of the parameter $\theta$. Similarly as in  Theorem~\ref{veta1},
we rewrite (local) $D$-, $A$-, and $E_k$-optimality criteria  in nonlinear regression model to a form:
\begin{equation}
\label{eqv1}
\phi(\xi,\theta)=\min_{\mu\in\Xi^*}\sum_{x\in\mathcal{X}}H(\mu,x,\theta)\xi(x).
\end{equation}
Here  $\Xi^*$ can be replaced by $\Xi$ or $\Xi^+$ depending on the considered criterion like in Theorem~\ref{veta1}.
%The reformulation of expressions in Theorem~\ref{veta1} in terms of maximin, i.e. worst-case, optimality criteria $\Phi(\xi)=\min_{\theta\in\Theta}\phi(\xi,\theta)$ is obvious.
The reformulation of expressions in Theorem~\ref{veta1} in terms of average (AVE) optimality  criteria $\int_{\Theta}\phi(x,\theta)d\pi(\theta)$, where $\pi(\theta)$ is supposed to be known prior distribution, is also possible and is given in Theorem~\ref{veta3}.
\begin{theorem}
We can write
$$
\int_{\Theta}\phi(\xi,\theta)d\pi(\theta)=\min_{\mu\in\Xi^*}\sum_{x\in\mathcal{X}}K(\mu,x,\theta)\xi(x),   
$$
where $K(\mu,x,\theta)=\int_{\Theta}H(\mu,x,\theta)d\pi(\theta)$.

\label{veta3}
\end{theorem}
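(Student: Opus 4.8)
The plan is to exploit the fact---established in the proof of Theorem~\ref{veta1} and inherited fiberwise by the representation~(\ref{eqv1})---that for each fixed $\theta$ the minimum over $\mu$ in $\phi(\xi,\theta)=\min_{\mu\in\Xi^*}\sum_{x}H(\mu,x,\theta)\xi(x)$ is attained at the common point $\mu=\xi$, independently of $\theta$. Indeed, the proof of Theorem~\ref{veta1} exhibits equality precisely for $S=M^{-1/2}(\xi)$ (the $D$-criterion), $S=M^{-1}(\xi)$ (the $A$-criterion), or $P=P^{(k)}(\xi)$ (the $E_k$-criterion), each of which is the choice $\mu=\xi$; carrying this out with $f(x)=\partial\eta(x,\theta)/\partial\theta$ and the $\theta$-dependent matrix $M(\cdot,\theta)$ leaves the minimizer $\mu=\xi$ unchanged for every $\theta$. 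Note also that, since the integral defining $K$ removes the parameter, the quantity $K(\mu,x,\theta)$ is in fact independent of $\theta$; I write $K(\mu,x)=\int_\Theta H(\mu,x,\theta)\,d\pi(\theta)$ below.

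First I would record the pointwise identity $\phi(\xi,\theta)=\sum_{x\in\mathcal{X}}H(\xi,x,\theta)\xi(x)$, valid for all $\theta\in\Theta$. Integrating against $\pi$ and interchanging the (finite) sum over $\mathcal{X}$ with the integral gives
$$
\int_\Theta\phi(\xi,\theta)\,d\pi(\theta)=\sum_{x\in\mathcal{X}}\Bigl[\int_\Theta H(\xi,x,\theta)\,d\pi(\theta)\Bigr]\xi(x)=\sum_{x\in\mathcal{X}}K(\xi,x)\xi(x),
$$
which is exactly the value of the right-hand objective evaluated at the candidate $\mu=\xi$.

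It then remains to check that $\mu=\xi$ minimizes $\sum_{x}K(\mu,x)\xi(x)$ over $\mu\in\Xi^*$. For an arbitrary $\mu\in\Xi^*$, the defining minimum yields $\sum_{x}H(\mu,x,\theta)\xi(x)\ge\phi(\xi,\theta)$ for every $\theta$; integrating and again swapping the finite sum with the integral produces $\sum_{x}K(\mu,x)\xi(x)\ge\int_\Theta\phi(\xi,\theta)\,d\pi(\theta)=\sum_{x}K(\xi,x)\xi(x)$. Hence the minimum over $\mu$ is attained at $\mu=\xi$ and equals $\int_\Theta\phi(\xi,\theta)\,d\pi(\theta)$, which is precisely the asserted identity.

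The one delicate point is the interchange of minimization and integration: in general only $\int_\Theta\min_\mu(\cdot)\,d\pi\le\min_\mu\int_\Theta(\cdot)\,d\pi$ holds. What secures the reverse inequality, and therefore equality, is exactly the $\theta$-independence of the minimizer $\mu=\xi$ furnished by Theorem~\ref{veta1}; absent a common minimizer the two sides could separate. For the $E_k$-criterion one should keep in mind that, although $\mu=\xi$ is uniform in $\theta$, the projector $P^{(k)}(\xi,\theta)$ itself varies with $\theta$ through the eigensystem of $M(\xi,\theta)$---this is harmless, since it enters only inside $H(\xi,x,\theta)$. Measurability of $\theta\mapsto H(\mu,x,\theta)$ and finiteness of the integrals defining $K$ are assumed throughout.
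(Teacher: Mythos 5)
Your proof is correct and follows essentially the same route as the paper's: both rest on the inequality $\phi(\xi,\theta)\leq\sum_{x}H(\mu,x,\theta)\xi(x)$ integrated against $\pi$ with the finite sum and integral interchanged, combined with the key observation from Theorem~\ref{veta1} that the minimizer $\mu=\xi$ is the same for every $\theta$, which turns the inequality into an equality. Your explicit remarks that $K$ does not actually depend on $\theta$ and that the $\theta$-independence of the minimizer is precisely what closes the gap between $\int\min$ and $\min\int$ are accurate refinements of the paper's terser argument, not a different method.
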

\begin{proof}
The design space $\mathcal{X}$ is assumed to be finite, hence the summation and the integration are interchangeable. From~(\ref{eqv1}) we have  $\phi(\xi,\theta)\leq\sum_{x\in\mathcal{X}}H(\mu,x,\theta)\xi(x)$ for any $\mu\in\Xi^*$ and for all $\theta\in\Theta$. We can write
\begin{equation}
\label{eqv2}
\int_{\Theta}\phi(\xi,\theta)d\pi(\theta)\leq\sum_{x\in\mathcal{X}}\left[\int_{\Theta}H(\mu,x,\theta)d\pi(\theta)\right]\xi(x).
\end{equation}
 %Notice, that for $\mu=\xi$ we obtain an equality in~(\ref{eqv2}), but the inequality~(\ref{eqv2}) holds for every $\mu\in\Xi$ and .
  Since  the inequality~(\ref{eqv2}) holds for every $\mu\in\Xi^*$, evidently:
\begin{equation}
\label{eqv3}
 \int_{\Theta}\phi(\xi,\theta)d\pi(\theta)\leq \min_{\mu\in\Xi^*}\sum_{x\in\mathcal{X}}\left[\int_{\Theta}H(\mu,x,\theta)d\pi(\theta)\right]\xi(x).
\end{equation}
   Theorem~\ref{veta1} implies that minimum is in~(\ref{eqv1}) attained at $\mu=\xi$  for any $\theta\in\Theta$,  so we obtain an equality in~(\ref{eqv2}) for $\mu=\xi$, which together with~(\ref{eqv3}) proofs the theorem.
\end{proof}

\section*{Appendix: Reformulation of criteria in terms of nonlinear models}
\label{sec:sec4}

Using the notation $\eta \left( x,\theta \right) =f^\top\left( x\right) \theta $
we can rewrite the expressions from Theorem~\ref{veta1} to a form, which formally
allows an extension of criteria to a nonlinear model 
\begin{eqnarray*}
y_x &=&\eta \left( x,\theta \right) +\varepsilon _x ,\\
\theta &\in &\Theta \subset \mathbb{R}^p.
\end{eqnarray*}
However, for the $D$-, $A$-, and $E_k$-optimality criteria we are not so successful as for the $E$-, $c$-, and $G$-optimality criteria in \cite{PP14}. Therefore we put the corresponding constructions only in the Appendix. 
\begin{theorem}
\label{veta2}
Let $\theta ^{\left( 0\right) }\in \Theta $ be a given vector. Denote 
$$
\mathcal{V}_{\theta ^{\left( 0\right) }}=\left\{ \left( \theta ^{\left(
1\right) },\hdots,\theta ^{\left( p\right) }\right) :\forall _i\;\theta
^{\left( i\right) }\in \Theta ,\,\left( \theta ^{\left( i\right) }-\theta
^{\left( 0\right) }\right) \neq 0\text{, }\left( \theta ^{\left( i\right)
}-\theta ^{\left( 0\right) }\right) ^\top\left( \theta ^{\left( j\right)
}-\theta ^{\left( 0\right) }\right) =0\text{, }i\neq j\right\} 
.$$
Further denote by $\left\| \theta ^{\left( i\right) }-\theta ^{\left(
0\right) }\right\| $ the Euclidean norm of $\theta ^{\left( i\right)
}-\theta ^{\left( 0\right) },$ and 
\[
\left\| \eta \left( \cdot,\theta ^{\left( i\right) }\right) -\eta \left(
\cdot,\theta ^{\left( 0\right) }\right) \right\| _\xi ^2=\sum_{x\in \mathcal{X}%
}\left[ \eta \left( x,\theta ^{\left( i\right) }\right) -\eta \left(
x,\theta ^{\left( 0\right) }\right) \right] ^2\xi \left( x\right) 
.\]
The ``extended'' criteria defined as:  
$$
\begin{aligned}
\phi _{eD}\left( \xi \right)& =\min_{\left( \theta ^{\left( 1\right)
},\hdots,\theta ^{\left( p\right) }\right) \in \mathcal{V}_{\theta ^{\left(
0\right) }}}\frac{\left( 1/p\right) \sum_{i=1}^p\left\| \eta \left( \cdot,\theta
^{\left( i\right) }\right) -\eta \left( \cdot,\theta ^{\left( 0\right) }\right)
\right\| _\xi ^2}{\left[ \prod_{j=1}^p\left\| \theta ^{\left( j\right)
}-\theta ^{\left( 0\right) }\right\| ^2\right] ^{1/p}},\\
\phi _{eA}\left( \xi \right) &=\min_{\left( \theta ^{\left( 1\right)
},\hdots,\theta ^{\left( p\right) }\right) \in \mathcal{V}_{\theta ^{\left(
0\right) }}}\frac{\sum_{i=1}^p\left\| \theta ^{\left( i\right) }-\theta
^{\left( 0\right) }\right\| ^2\left\| \eta \left( \cdot,\theta ^{\left( i\right)
}\right) -\eta \left( \cdot,\theta ^{\left( 0\right) }\right) \right\| _\xi ^2}{%
\left[ \sum_{j=1}^p\left\| \theta ^{\left( j\right) }-\theta ^{\left(
0\right) }\right\| ^2\right] ^2},\\
\phi _{eE_k}\left( \xi \right) &=\min_{\left( \theta ^{\left( 1\right)
},\hdots,\theta ^{\left( p\right) }\right) \in \mathcal{V}_{\theta ^{\left(
0\right) }}}\sum_{i=1}^k\frac{\left\| \eta \left( \cdot,\theta ^{\left( i\right)
}\right) -\eta \left( \cdot,\theta ^{\left( 0\right) }\right) \right\| _\xi ^2}{%
\left\| \theta ^{\left( i\right) }-\theta ^{\left( 0\right) }\right\| ^2}
\end{aligned}
$$
coincide with those in Theorem~\ref{veta1} in case that the model is linear.
\end{theorem}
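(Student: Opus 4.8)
The plan is to substitute the linear response $\eta(x,\theta)=f^\top(x)\theta$ and to reduce each of the three extended criteria to the matrix expressions already established in Theorem~\ref{veta1}. Writing $v_i=\theta^{(i)}-\theta^{(0)}$, the conditions defining $\mathcal{V}_{\theta^{(0)}}$ say precisely that $v_1,\hdots,v_p$ are nonzero and mutually orthogonal, i.e. they form an orthogonal basis of $\mathbb{R}^p$. The key identity is
\[
\left\|\eta\left(\cdot,\theta^{(i)}\right)-\eta\left(\cdot,\theta^{(0)}\right)\right\|_\xi^2=\sum_{x\in\mathcal{X}}\left[f^\top(x)v_i\right]^2\xi(x)=v_i^\top M(\xi)v_i,
\]
which turns each extended criterion into a minimization, over orthogonal bases of $\mathbb{R}^p$, of a ratio built from the numbers $v_i^\top M(\xi)v_i$ and $\|v_i\|^2$.

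Next I would split every $v_i=r_ie_i$ into its length $r_i=\|v_i\|>0$ and direction, with $\{e_i\}$ orthonormal, and set $a_i=e_i^\top M(\xi)e_i$ (positive on $\Xi^+$). Then the criteria become: for $\phi_{eD}$, minimize $\tfrac{(1/p)\sum_i r_i^2a_i}{(\prod_j r_j^2)^{1/p}}$; for $\phi_{eA}$, minimize $\tfrac{\sum_i r_i^4a_i}{(\sum_j r_j^2)^2}$; and for $\phi_{eE_k}$, minimize $\sum_{i=1}^k a_i$, where in the last case the lengths cancel identically. The minimization is then carried out in two stages, first over the lengths $r_i$ and then over the directions $\{e_i\}$. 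In the radial stage, for $\phi_{eD}$ the arithmetic--geometric mean inequality gives $\tfrac1p\sum_i r_i^2a_i\ge(\prod_i r_i^2a_i)^{1/p}$, so the inner minimum equals $(\prod_i a_i)^{1/p}$, attained when the products $r_i^2a_i$ are all equal; for $\phi_{eA}$, fixing $\sum_j r_j^2=1$ and minimizing $\sum_i(r_i^2)^2a_i$ by a Lagrange argument gives the inner minimum $1/\sum_i a_i^{-1}$, attained at $r_i^2\propto a_i^{-1}$.

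For the directional stage I would invoke one classical matrix inequality per criterion, each with its equality case. For $\phi_{eD}$, Hadamard's inequality in the form $\det M(\xi)\le\prod_{i=1}^p e_i^\top M(\xi)e_i$ (equality at the eigenbasis) yields $\phi_{eD}(\xi)=\det^{1/p}[M(\xi)]=\phi_D(\xi)$. For $\phi_{eA}$, the Kantorovich/Cauchy--Schwarz bound $e^\top M^{-1}(\xi)e\ge\left(e^\top M(\xi)e\right)^{-1}$ for unit vectors $e$ gives $\sum_i a_i^{-1}\le\sum_i e_i^\top M^{-1}(\xi)e_i=tr[M^{-1}(\xi)]$, with equality at the eigenbasis, so maximizing $\sum_i a_i^{-1}$ produces $\phi_{eA}(\xi)=1/tr[M^{-1}(\xi)]=\phi_A(\xi)$. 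For $\phi_{eE_k}$, since $\sum_{i=1}^k e_i^\top M(\xi)e_i=tr[PM(\xi)P]$ with $P=\sum_{i=1}^k e_ie_i^\top$ a $k$-dimensional orthogonal projector, the bound $tr[PM(\xi)P]\ge\sum_{i=1}^k\lambda_i(\xi)$ already proved in~(\ref{c2}) (attained at $P=P^{(k)}(\xi)$) gives $\phi_{eE_k}(\xi)=\phi_{E_k}(\xi)$.

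The main obstacle I anticipate is the $A$-optimality case, where the two stages do not decouple as transparently as for $D$ or $E_k$: one must first solve the constrained radial minimization correctly and only then recognize that the resulting quantity $\sum_i a_i^{-1}$ is maximized over orthonormal bases exactly at the eigenbasis, which requires the Kantorovich-type inequality rather than a direct appeal to Hadamard or to~(\ref{c2}). The remaining point needing care is to check that the radial and directional equality cases are simultaneously achievable --- choosing $\{e_i\}$ to be eigenvectors of $M(\xi)$ forces $a_i=\lambda_i(\xi)$, after which the optimal lengths $r_i^2\propto\lambda_i^{-1}(\xi)$ (for $A$) or $r_i^2\propto\lambda_i^{-1}(\xi)$ making $r_i^2a_i$ constant (for $D$) are indeed attainable --- so that the claimed minima are genuinely attained and coincide with $\phi_D$, $\phi_A$, and $\phi_{E_k}$.
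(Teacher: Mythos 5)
Your argument is correct, and it reaches the conclusion by a genuinely different route from the paper. The paper's proof works backwards from Theorem~\ref{veta1}: for each $\mu\in\Xi^+$ it decomposes $M^{-1}(\mu)=\sum_i\nu_i(\mu)\nu_i^\top(\mu)$ with $\nu_i(\mu)=u_i(\mu)/\sqrt{\lambda_i(\mu)}$, observes that $H_D$, $H_A$, $H_{E_k}$ then take exactly the ``extended'' form with $\theta^{(i)}(\mu)=\theta^{(0)}+\nu_i(\mu)$, and finally invokes Remark~\ref{rem1} to pass from the minimum over $\{(\theta^{(i)}(\mu))_i:\mu\in\Xi^+\}$ to the minimum over all of $\mathcal{V}_{\theta^{(0)}}$, since every orthogonal system corresponds to an admissible nonsingular matrix $B=\bigl[\sum_i(\theta^{(i)}-\theta^{(0)})(\theta^{(i)}-\theta^{(0)})^\top\bigr]^{-1}$ (respectively a $k$-dimensional projector) and the family contains $M^{-1}(\xi)$ (respectively $P^{(k)}(\xi)$). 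You instead work forwards and self-containedly: the identity $\|\eta(\cdot,\theta^{(i)})-\eta(\cdot,\theta^{(0)})\|_\xi^2=v_i^\top M(\xi)v_i$ with $v_i=r_ie_i$ splits the minimization into a radial stage (AM--GM for $D$, a Cauchy--Schwarz/Lagrange computation for $A$, trivial cancellation for $E_k$) and a directional stage over orthonormal bases (Hadamard for $D$, the Kantorovich-type bound $e^\top M^{-1}e\ge(e^\top Me)^{-1}$ for $A$, and the projector bound~(\ref{c2}) for $E_k$), with the equality cases checked to be simultaneously attainable at $v_i\propto u_i(\xi)/\sqrt{\lambda_i(\xi)}$ --- exactly the paper's $\nu_i(\xi)$. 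What the paper's route buys is brevity, since all the analytic work is delegated to the already-proved Theorem~\ref{veta1} and Remark~\ref{rem1}; what your route buys is an independent verification that does not rely on Remark~\ref{rem1}, at the cost of essentially re-deriving the inequalities underlying Theorem~\ref{veta1} in the eigenbasis parametrization. Both are valid for $\xi\in\Xi^+$ (for $D$ and $A$) and $\xi\in\Xi$ (for $E_k$), as required.
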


\begin{proof}

Consider first the expression for $\phi _D\left( \xi \right) $ in Theorem~\ref{veta1}.
Using the notation from Sec.~\ref{sec:sec1} for every $\mu
\in \Xi ^{+}$ we can write $M^{-1}\left( \mu \right) =\sum_{i=1}^p\nu
_i\left( \mu \right) \nu _i^\top\left( \mu \right) $ with $\nu _i\left( \mu
\right) $ $=u_i\left( \mu \right) /\sqrt{\lambda _i\left( \mu \right) }$
(the normed eigenvector divided by the square root of the eigenvalue), and $%
\left\| \nu _i\left( \mu \right) \right\| ^2=\lambda _i^{-1}\left( \mu
\right) $. It follows that 
\[
\frac{\det^{1/p}\left[ M\left( \mu \right) \right] }pf^\top\left( x\right)
M^{-1}\left( \mu \right) f\left( x\right) =\frac{\left( 1/p\right)
\sum_{i=1}^p\left[ f^\top\left( x\right) \nu _i\left( \mu \right) \right] ^2}{%
\left[ \prod_{j=1}^p\left\| \nu _i\left( \mu \right) \right\| ^2\right]
^{1/p}} 
.\]
Denote $\theta ^{\left( i\right) }\left( \mu \right) =\theta ^{\left( 0\right)
}+\nu _i\left( \mu \right) $. In the linear model $f^\top\left(
x\right) \nu _i\left( \mu \right) =\eta \left( x,\theta ^{\left( i\right)
}\left( \mu \right) \right) -\eta \left( x,\theta ^{\left( 0\right) }\left(
\mu \right) \right) $.  So from Theorem~\ref{veta1} it follows that 
\begin{equation}
\phi _D\left( \xi \right) =\min_{\mu \in \Xi ^{+}}\frac{\left( 1/p\right)
\sum_{i=1}^p\left\| \eta \left(\cdot,\theta ^{\left( i\right) }\left( \mu
\right) \right) -\eta \left( \cdot,\theta ^{\left( 0\right) }\left( \mu \right)
\right) \right\| _\xi ^2}{\left[ \prod_{j=1}^p\left\| \theta ^{\left(
j\right) }\left( \mu \right) -\theta ^{\left( 0\right) }\left( \mu \right)
\right\| ^2\right] ^{1/p}}  .\label{e}
\end{equation}

Evidently $\left( \theta ^{\left( 1\right) }\left( \mu \right) ,\hdots,\theta
^{\left( p\right) }\left( \mu \right) \right) \in \mathcal{V}_{\theta
^{\left( 0\right) }}$. On the other hand, for any $\left( \theta ^{\left(
1\right) },\hdots,\theta ^{\left( p\right) }\right) \in \mathcal{V}_{\theta
^{\left( 0\right) }}$ we define  $B=\left[\sum_{i=1}^p\left(
\theta ^{\left( i\right) }-\theta ^{\left( 0\right) }\right) \left( \theta
^{\left( i\right) }-\theta ^{\left( 0\right) }\right) ^\top\right]^{-1}$. From Remark~\ref{rem1} of Theorem~\ref{veta1} it follows that we can take the minimum in~(\ref{e}) with
respect to all $\left( \theta ^{\left( 1\right) },\hdots,\theta ^{\left(
p\right) }\right) \in \mathcal{V}_{\theta ^{\left( 0\right) }}$ and not with respect to all $\mu\in\Xi^+$.

We proceed similarly for $A$-optimality. We have $M^{-2}\left( \mu \right)
=\sum_{i=1}^P\left\| \nu _i\left( \mu \right) \right\| ^2\nu _i\left( \mu
\right) \nu _i^\top\left( \mu \right) $ and $tr\left[ M^{-1}\left( \mu
\right) \right] =\sum_{i=1}^p\lambda _i^{-1}\left( \mu \right) $, so 
\begin{eqnarray*}
\frac{\left\| M^{-1}\left( \mu \right) f\left( x\right) \right\| ^2}{\left\{
tr\left[ M^{-1}\left( \mu \right) \right] \right\} ^2} &=&\frac{%
\sum_{i=1}^p\left\| \nu _i\left( \mu \right) \right\| ^2\left[ f^\top\left(
x\right) \nu _i\left( \mu \right) \right] ^2}{\left[ \sum_{j=1}^p\left\| \nu
_j\left( \mu \right) \right\| ^2\right] ^2} \\
&=&\frac{\sum_{i=1}^p\left\| \theta ^{\left( i\right) }\left( \mu \right)
-\theta ^{\left( 0\right) }\right\| ^2\left[ \eta \left( x,\theta ^{\left(
i\right) }\left( \mu \right) \right) -\eta \left( x,\theta ^{\left( 0\right)
}\left( \mu \right) \right) \right] ^2}{\left[ \sum_{j=1}^p\left\| \theta
^{\left( j\right) }\left( \mu \right) -\theta ^{\left( 0\right) }\right\|
^2\right] ^2}.
\end{eqnarray*}

For the $E_k$-optimality criterion we write $P^{(k)}\left( \mu \right)
=\sum_{i=1}^k\left\| \nu _i\left( \mu \right) \right\| ^{-2}\nu _i\left( \mu
\right) \nu _i^\top\left( \mu \right) $, hence 
\begin{eqnarray*}
\left\| P^{(k)}\left( \mu \right) f\left( x\right) \right\| ^2
&=&\sum_{i=1}^k\left\| \nu _i\left( \mu \right) \right\| ^{-2}\left[
f^\top\left( x\right) \nu _i\left( \mu \right) \right] ^2 \\
&=&\sum_{i=1}^k\frac{\left[ \eta \left( x,\theta ^{\left( i\right) }\left(
\mu \right) \right) -\eta \left( x,\theta ^{\left( 0\right) }\left( \mu
\right) \right) \right] ^2}{\left\| \theta ^{\left( i\right) }\left( \mu
\right) -\theta ^{\left( 0\right) }\right\| ^2}.
\end{eqnarray*}

\end{proof}

\begin{remark} 
\label{rem3}
The expressions in Theorem~\ref{veta2} are evidently linear in $\xi $, so maximization of $\phi _D\left( \xi \right) ,\,\phi _A\left(
\xi \right) ,$ and $\phi _{E_k}\left( \xi \right) $ with respect to $\xi $
corresponds to an ``infinite-dimensional'' LP problem even in a nonlinear
model. However this problem is too complex to be used for experimental
design. Moreover, in contrast to the criteria considered in \cite{PP14}, a clear
statistical interpretation is still missing. 
\end{remark}

\paragraph{Acknowledgements.} We would like to thank Luc Pronzato for  helpful advises. The paper was supported by the Slovak VEGA-Grant No. 1/0163/13.

\bibliographystyle{apalike}
\bibliography{references}

\begin{thebibliography}{}

\bibitem[Atkinson et~al., 1993]{A93}
Atkinson, A.~C., Chaloner, K., Herzberg, A.~M., and Juritz, J. (1993).
\newblock Optimal experimental designs for properties of a compartmental model.
\newblock {\em Biometrics}, 49:325--327.

\bibitem[Atkinson and Donev, 1992]{AD92}
Atkinson, A.~C. and Donev, A.~N. (1992).
\newblock {\em Optimum Experimental Designs}.
\newblock Oxford University Press, Oxford.

\bibitem[Bonnans et~al., 2006]{B06}
Bonnans, J., Gilbert, J., Lemar\'{e}chal, C., and Sagastiz\'{a}bal, C. (2006).
\newblock {\em Numerical Optimization. Theoretical and Practical Aspects}.
\newblock Springer, Heidelberg, 2nd edition.

\bibitem[Filov\'a and Harman, 2013]{FH13}
Filov\'a, L. and Harman, R. (2013).
\newblock Criterion-robust experimental designs for the quadratic regression on
  a square and a cube.
\newblock {\em Communications in Statistics - Theory and Methods},
  42(11):2044--2055.

\bibitem[Gaffke, 1987]{G87}
Gaffke, N. (1987).
\newblock Further characterizations of design optimality and admissibility for
  partial parameter estimation in linear regression.
\newblock {\em Annals of Statistics}, 115(3):942--957.

\bibitem[Harman, 2004]{H04}
Harman, R. (2004).
\newblock Minimal efficiency of designs under the class of orthogonally
  invariant information criteria.
\newblock {\em Metrika}, 60:137--153.

\bibitem[Harman and Jur\'ik, 2008]{HJ08}
Harman, R. and Jur\'ik, T. (2008).
\newblock Computing $c$-optimal experimental designs using the simplex method
  of linear programming.
\newblock {\em Computational Statistics and Data Analysis}, 59:247--254.

\bibitem[Harville, 2000]{H00}
Harville, D.~A. (2000).
\newblock {\em Matrix Algebra From a Statistician's Perspective}.
\newblock Springer, New York.

\bibitem[Heiligers, 1992]{H92}
Heiligers, B. (1992).
\newblock Admissible experimental designs in multiple polynomial regression.
\newblock {\em Journal of Statistical Planning and Inference}, 31(2):219--233.

\bibitem[Kelley, 1960]{K60}
Kelley, J. (1960).
\newblock The cutting plane method for solving convex programs.
\newblock {\em Journal of the Society for Industrial and Applied Mathematics},
  8(4):703--712.

\bibitem[Kiefer, 1974]{K74}
Kiefer, J. (1974).
\newblock General equivalence theory for optimum designs (approximate theory).
\newblock {\em Annals of Statistics}, 2(5):849--879.

\bibitem[Kiefer, 1975]{K75}
Kiefer, J. (1975).
\newblock Optimal design: Variation in structure and performance under change
  of criterion.
\newblock {\em Biometrika}, 62(2):277--288.

\bibitem[Kiefer and Wolfowitz, 1959]{KW59}
Kiefer, J. and Wolfowitz, J. (1959).
\newblock Optimum designs in regression problems.
\newblock {\em Annals of Mathematical Statistics}, 30:271--294.

\bibitem[Nesterov, 2004]{N04}
Nesterov, Y. (2004).
\newblock {\em Introductory Lectures to Convex Optimization: A Basic Course}.
\newblock Kluwer, Dordrecht.

\bibitem[P\'azman and Pronzato, 2014]{PP14}
P\'azman, A. and Pronzato, L. (2014).
\newblock Optimum design accounting for the global nonlinear behavior of the
  model.
\newblock {\em Annals of Statistics}, 42(4):194--219.

\bibitem[Pronzato and P\'{a}zman, 2013]{PP13}
Pronzato, L. and P\'{a}zman, A. (2013).
\newblock {\em Design of Experiments in Nonlinear Models. Asymptotic Normality,
  Optimality Criteria and Small-Sample Properties}.
\newblock Lecture Notes in Statistics, Vol. 212. Springer, New York,
  Heidelberg.

\bibitem[Pukelsheim, 1993]{P93}
Pukelsheim, F. (1993).
\newblock {\em Optimal Design of Experiments}.
\newblock Wiley, New York.

\bibitem[Pukelsheim and Torsney, 1991]{PT91}
Pukelsheim, F. and Torsney, B. (1991).
\newblock Optimal weights for experimental designs on linearly independent
  support points.
\newblock {\em Annals of Statistics}, 19(3):1614--1625.

\bibitem[Steele, 2004]{S04}
Steele, J.~M. (2004).
\newblock {\em The Cauchy-Schwarz Master Class: An Introduction to the Art of
  Mathematical Inequalities}.
\newblock Cambridge University Press, New York.

\end{thebibliography}

\end{document}